\newtheorem{theorem}{Theorem}
\newtheorem{example}{Example}
\newtheorem{corollary}{Corollary}
\begin{document}

\title{On Ergotropic Gap of Tripartite Separable Systems}
\smallskip
\author{Ya-Juan Wu$^1$}
\author{Shao-Ming Fei$^{2,3}$}
\author{Zhi-Xi Wang$^2$}
\thanks{Corresponding author: wangzhx@cnu.edu.cn}
\author{Ke Wu$^2$}
\thanks{Corresponding author: wuke@cnu.edu.cn}
\affiliation{
{\footnotesize $^1$School of Mathematics, Zhengzhou  University of Aeronautics, Zhengzhou 450046, China}\\
{\footnotesize $^2$School of Mathematical Sciences, Capital Normal University, Beijing 100048, China}\\
{\footnotesize $^3$Max-Planck-Institute for Mathematics in the Sciences, 04103 Leipzig, Germany}
}
\begin{abstract}
Extracting work from quantum system is one of the important areas in quantum thermodynamics. As a significant thermodynamic quantity, the ergotropy gap characterizes the difference between the global and local maximum extractable works. We derive an analytical upper bound of the ergotropic gap with respect to $d\times d\times d$ tripartite separable states. This bound also provides a necessary criterion for the separability of tripartite states. Detailed examples are presented to illustrate the efficiency of this separability criterion.
\end{abstract}

\maketitle

\section{Introduction}
Information plays important roles in thermodynamics \cite{RL, CH, KM}.
Work extraction is a significant aspect in thermodynamics, while quantum correlations are the basic resources in quantum information processing tasks. Their connections have been extensively studied  \cite{RD, KF, JO, RA, WH, SJ, OCO, KMF, VV, HC, KMF}.

Extracting work from quantum systems has gained renewed interest in quantum thermodynamics \cite{LD, HB}. In \cite{GFJ}, the authors addressed the concept of ergotropy: the maximum work that can be gained from a quantum state with respect to some reference Hamiltonian and a cyclical unitary transformations. The maximum extractable work can be naturally divided into two parts: the local contribution from each subsystems, named local ergotropy, and the global contribution originating from correlations among the subsystems, named global ergotropy \cite{MP}. The work that can be extracted is proportional to the quantum mutual information under the global operations of the subsystems \cite{JO, SJ}.
As the extractable work under cyclic local interaction is strictly less than that obtained under global interaction, the presence of quantum correlations among the subsystems may
result in non-vanishing ergotropic gap for the case of non-degenerate energy subspace \cite{AMA}. This fact brings a new insight in the quantum-to-classical transition in thermodynamics.

As an essential kind of correlations, quantum entanglement plays an important role in quantum teleportation, quantum cryptography, quantum dense coding, quantum secret sharing  and the development of quantum computer \cite{C.H1, C.H2, C.H3,  C.B, R.H, Z.Z, DG, YA}.
Distinguishing quantum entangled states from separable ones is of the significant but difficult problems in the theory of quantum entanglement. Numerous entanglement criteria have been proposed, such as the positive partial transpose (PPT) criterion \cite{AP, MHPH}, realignment criteria \cite{OR, KC, CJZ}, correlation matrix or tensor criteria \cite{JI1, JI2, ASM, JI3}, covariance matrix criteria \cite{OG1, OG2}, entanglement witnesses \cite{MHPH, BMT}, separability criteria via measurements \cite{CSM} and so on. In \cite{MA}, Mir Alimuddin {\it et al.} presented a separability criterion (the operational thermodynamic  criterion) based on the evaluation of ergotropic gap of bipartite systems.

In this paper, we focus on the thermodynamic quantity ergotropic gap, given by the difference between the global ergotropy and local ergotropy associated with the tripartite quantum system.
We present an upper bound on the ergotropic gap for arbitrary dimensional tripartite separable states. The bound presents a sufficient criteria for the entanglement of tripartite states.

\section{Bounds on the ergotropic gap of tripartite separable states}


Consider a $d\times d\times d$ tripartite state $\rho_{ABC}\in\mathcal{D}\{\mathcal{H}_A\otimes\mathcal{H}_B\otimes\mathcal{H}_C\}$, where $\mathcal{H}_{X}(X=A, B, C)$ denotes the Hilbert space corresponding to the  subsystem  $X$  and $\mathcal{D}(\mathfrak{X})$ denotes the set of density operator acting on Hilbert space $\mathfrak{X}$. The subsystem X is governed by the local Hamiltonian $H_{X}=\sum\limits_{j=0}^{d-1}j\,E|j\rangle\langle j|$, where $j\,E$ and $|j\rangle$ are the $jth$ energy eigenvalue and eigenvector of $H_X$, respectively. The total non-interacting global Hamiltonian is $H_{ABC}=H_A\otimes I_B\otimes I_C+I_A\otimes H_B\otimes I_C+I_A\otimes I_B\otimes H_C$, where $I_{X}$ denotes the identity operator acting on the Hilbert space $\mathcal{H}_{X}$. Under a cyclic Hamiltonian process, a  time-dependent unitary operation $U(\tau)=\overrightarrow{exp} \left( -i\hbar \int^{\tau }_{0}  dt\left[ H_{ABC}+V(t)\right]  \right) $ can be applied,  which $\overrightarrow{exp}$  denotes the time-ordered exponential and $V(t)$  denotes a time-dependent interaction among the subsystems. The  work extraction from an isolated tripartite system under such a process is the change  in average energy of the system, with the form $\mathcal{W}_e =\mathrm{Tr}((\rho_{ABC}-U\rho_{ABC}U^\dag)H_{ABC})$. Within this framework,  the maximum work,   extracted from the isolated tripartite state $\rho_{ABC}$ by transforming it to the corresponding passive state $\rho_{ABC}^p$, is defined below in (\ref{1}).

The maximum extractable work, called global ergotropy, is defined by
\begin{equation}\label{1}
\begin{split}
\mathcal{W}_e^g &=\max\limits_{U\in\mathcal{L}(\mathcal{H}_A\otimes\mathcal{H}_B
\otimes\mathcal{H}_C)}\mathrm{Tr}((\rho_{ABC}-U\rho_{ABC}U^\dag)H_{ABC})\\
&=\mathrm{Tr}(\rho_{ABC}H_{ABC})-\min\limits_{U\in \mathcal{L}(\mathcal{H}_A\otimes\mathcal{H}_B\otimes\mathcal{H}_C)}\mathrm{Tr}(U\rho_{ABC}U^\dag H_{ABC})\\
&=\mathrm{Tr}(\rho_{ABC}H_{ABC})-\mathrm{Tr}(\rho_{ABC}^pH_{ABC}),
\end{split}
\end{equation}
where $\mathcal{L}(X')$ denotes the set of all bounded linear operators on the Hilbert space $X'$ and  $\rho_{ABC}^p$ is the passive state  with Hamiltonian $H_{ABC}$ for system $ABC$, from which no work can be extracted,   of the form $\rho_{ABC}^p=\sum_{j} \rho_{j} |j\rangle \langle j|$ with $\rho_{j+1} \leqslant \rho_{j}$. $\rho_{ABC}$ and $\rho_{ABC}^p$  have the same spectrum, and therefore there exists a unitary operator $U$ transforming the former to the latter.

The total achievable work, called local ergotropy, is given by
\begin{equation}\label{W}
\mathcal{W}_e^l =\mathcal{W}_e^A+\mathcal{W}_e^B+\mathcal{W}_e^C,
\end{equation}
where $\mathcal{W}_e^A$, $\mathcal{W}_e^B$ and $\mathcal{W}_e^C$ are the maximum local extractable works from systems $A$, $B$ and $C$, respectively,
\begin{equation}
\begin{split}
\mathcal{W}_e^A &=\mathrm{Tr}(\rho_{ABC}H_A\otimes I_B\otimes I_C)-\min\limits_{U\in \mathcal{L}(\mathcal{H}_A)}\mathrm{Tr}((U_A\otimes I_B\otimes I_C)\rho_{ABC}(U_A\otimes I_B\otimes I_C)^\dag H_A\otimes I_B\otimes I_C)\\
&=\mathrm{Tr}(\rho_{ABC}H_A\otimes I_B\otimes I_C)-\mathrm{Tr}(\rho_A^pH_A),
\end{split}
\end{equation}
and, similarly, $\mathcal{W}_e^B =\mathrm{Tr}(\rho_{ABC}I_A\otimes H_B\otimes I_C)-\mathrm{Tr}(\rho_B^pH_B)$,
$\mathcal{W}_e^C=\mathrm{Tr}(\rho_{ABC}I_A\otimes I_B\otimes H_C)-\mathrm{Tr}(\rho_C^pH_C)$
with $\rho_A^p$, $\rho_B^p$ and $\rho_C^p$ the passive states associated with the subsystems $A$, $B$ and $C$, respectively.
Hence,
\begin{equation}
\mathcal{W}_e^l =\mathrm{Tr}(\rho_{ABC}H_{ABC})-\big\{\mathrm{Tr}(\rho_A^p H_A)+\mathrm{Tr}(\rho_B^p H_B)+\mathrm{Tr}(\rho_C^p H_C)\big\}.
\end{equation}

The difference between the global ergotropy and the local ergotropy is called the ergotropic gap $\Delta_{EG}$,
\begin{equation}\label{EG}
\Delta_{EG}=\mathcal{W}_e^g-\mathcal{W}_e^l=\big\{\mathrm{Tr}(\rho_A^p H_A)+\mathrm{Tr}(\rho_B^p H_B)+\mathrm{Tr}(\rho_C^p H_C)\big\}-\mathrm{Tr}(\rho_{ABC}^pH_{ABC}).
\end{equation}
Indeed, $\Delta_{EG}\geq0$ as global unitary operations are capable of extracting work from subsystems as well as from correlations among the subsystems. Clearly the ergotropic gap depends on various kinds of correlations presented in a tripartite quantum system.
It is generally a challenging problem to compute $\Delta_{EG}$ analytically. In the following
we derive analytic upper bounds of the ergotropic gap.
\begin{theorem}
Consider a $d\times d\times d$ tripartite state $\rho_{ABC}$ with spectrum $\lambda(\rho_{ABC})=\{x_{0}, x_{1}, \cdots, x_{d^3-1}\}$ in nonincreasing order. Let the subsystems  be governed by the same Hamiltonian $H_{A}=H_{B}=H_{C}=\sum\limits_{j=0}^{d-1}j\,E|j\rangle\langle j|$. If $\rho_{ABC}$ is separable, then the ergotropic gap is bounded by
\begin{equation}
\Delta_{EG}\leq \min\Big\{(Y-Z)E,M(d)E\Big\},
\end{equation}
where
$$Y=3\sum\limits_{i=0}^{d-1}ix_i+3(d-1)\sum\limits_{i=d}^{d^3-1}x_i,$$
$$Z=\sum\limits_{i=1}^{d-1}i\sum\limits_{j_i'=0}^{\frac{(i+1)(i+2)}{2}-1}x_{D_i+j_i'}+
\sum\limits_{i=1}^{d-1}(d-1+i)\sum\limits_{k_i'=0}^{\frac{(d+i)(d+i+1)}{2}-\frac{3}{2}i(i+1)-1}x_{D_{d+i-1}-3D_{i-1}+{k_i'}}
+\sum\limits_{i=1}^{d-1}(2d-2+i)\sum\limits_{{l_i'}=0}^{\frac{(d-i)(d-i+1)}{2}-1}x_{D-D_{d-i}+{l_i'}},$$
and
$$D=\frac{(2d-1)2d(2d+1)}{6}-\frac{(d-1)d(d+1)}{3},$$$$D_i=\frac{i(i+1)(i+2)}{6},$$
$$M(d)=\frac{3(d-1)}{2}-\frac{l}{d}[\frac{l^3+2l^2-5l+2}{8}+m+1].$$
The integers $l$ and $m$ are uniquely determined by the constraint $\frac{(l-1)(l+1)(l+2)}{6}+m=d-1$, where $0\leq m\leq \frac{(l+1)(l+2)}{2}$.
\end{theorem}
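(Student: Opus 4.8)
The plan is to treat the two groups of terms in the ergotropic gap (\ref{EG}) separately, bounding the local passive energies from above by invoking separability and evaluating the global passive energy exactly. For a reduced state $\rho_X$ with sorted eigenvalues $\mu_0^{(X)}\ge\cdots\ge\mu_{d-1}^{(X)}$, the passive-state prescription gives $\mathrm{Tr}(\rho_X^pH_X)=E\sum_{j=0}^{d-1}j\,\mu_j^{(X)}$, and Abel summation rewrites this as $E[(d-1)-\sum_{k=1}^{d-1}S_k^{(X)}]$ with partial sums $S_k^{(X)}=\sum_{j=0}^{k-1}\mu_j^{(X)}$. Thus upper-bounding the local energy amounts to lower-bounding every partial sum $S_k^{(X)}$.

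The first key step is the disorder (Nielsen--Kempe) criterion: since $\rho_{ABC}$ is separable, its global spectrum is majorized by each local spectrum, $\lambda(\rho_{ABC})\prec\lambda(\rho_X)$, giving $S_k^{(X)}\ge\sum_{i=0}^{k-1}x_i$ for every $k$. Summing these inequalities over $k$ and inserting them into the Abel identity yields $\mathrm{Tr}(\rho_X^pH_X)\le E[\sum_{i=0}^{d-1}ix_i+(d-1)\sum_{i=d}^{d^3-1}x_i]=EY/3$, and summing over $X=A,B,C$ bounds the local contribution by $EY$.

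The second step is the exact evaluation of $Z=\mathrm{Tr}(\rho_{ABC}^pH_{ABC})/E$. Here I would compute the degeneracy of each energy level $sE$ of $H_{ABC}$, namely the number of $(a,b,c)$ with $a+b+c=s$ and $0\le a,b,c\le d-1$, which equals $\binom{s+2}{2}$ for $0\le s\le d-1$, $\binom{s+2}{2}-3\binom{s-d+2}{2}$ for $d\le s\le 2d-2$, and the mirror value for $2d-1\le s\le 3d-3$. The passive ordering assigns the largest probabilities to the lowest energies, so the cumulative occupation numbers are exactly the $D_i$, the starting indices are $D_{d+i-1}-3D_{i-1}$ and $D-D_{d-i}$ with $D=d^3$, and tracking these through the three regimes reproduces the stated three-sum expression for $Z$. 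Combining this with the local bound gives $\Delta_{EG}\le(Y-Z)E$.

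For the spectrum-free bound $M(d)E$ the plan is to maximize the gap over all separable states. Bounding the local energies instead by their universal value $\tfrac{d-1}{2}E$ per subsystem (attained at maximally mixed marginals) gives $\Delta_{EG}\le[\tfrac{3(d-1)}{2}-Z]E$, while the disorder criterion forces the global spectrum to be flat enough that $Z$ cannot be made too small; the extremal balance is realized by the maximally correlated separable state of the form $\tfrac{1}{d}\sum_j|jjj\rangle\langle jjj|$ (suitably generalized). Because the gap is linear in the spectrum, the optimization reduces to a discrete problem over the number of occupied energy slots, and writing this number through the pyramidal index $\frac{(l-1)(l+1)(l+2)}{6}+m=d-1$ selects the optimal truncation and collapses the resulting cubic sums into the closed form $M(d)$. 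I expect the main obstacle to be precisely this bookkeeping: matching the inclusion--exclusion corrections and index shifts term by term in $Z$, and then verifying that the discrete optimum sits at the level encoded by $(l,m)$ and simplifies to the stated polynomial; the majorization input itself is immediate once the disorder criterion is invoked.
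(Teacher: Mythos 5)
Your first two steps are sound and coincide with the paper's: the $(Y-Z)E$ bound is exactly the paper's part (i) — the Nielsen--Kempe disorder criterion applied to each marginal, with your Abel-summation rewriting being the same manipulation the paper performs by subtracting the majorization inequalities \eqref{xiu1} from normalization to get \eqref{xiu8} — and your exact evaluation of $Z$ via the level degeneracies $\binom{s+2}{2}$, $\binom{s+2}{2}-3\binom{s-d+2}{2}$ and the mirror values reproduces the paper's passive-state bookkeeping, including the (correct) simplification $D=d^3$ of the paper's expression for $D$.

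The genuine gap is in your route to $M(d)E$. You decouple the two terms: bound the local passive energies by the universal value $\tfrac{3(d-1)}{2}E$, and then invoke a state-independent lower bound on $Z$ from separability. No such lower bound exists: a pure product state is separable, puts all spectral weight in the lowest global slot, and has $Z=0$ (along with vanishing local terms). So the claim that ``the disorder criterion forces the global spectrum to be flat enough that $Z$ cannot be made too small'' is false unconditionally — the criterion only bounds $Z$ from below \emph{in terms of the marginal spectrum}. Decoupled, your two bounds yield only $\Delta_{EG}\le\tfrac{3(d-1)}{2}E$, strictly weaker than $M(d)E$, because the maximizer of the local term (uniform marginals) and the minimizer of $Z$ (pure states) are different states; exhibiting $\tfrac{1}{d}\sum_j|jjj\rangle\langle jjj|$ as the ``extremal balance'' is not a proof that the trade-off resolves there.

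The missing idea is the coupling, which the paper implements by the add-and-subtract step \eqref{th5}--\eqref{th7}: write $Z=\sum_i \epsilon_i x_i$ with $\epsilon_i$ the nondecreasing slot energies, decompose by levels, and use the very tail inequalities \eqref{xiu8} you already derived (with the \emph{full} tails $\sum_{i\ge D_t}^{d^3-1}x_i$ — the paper's \eqref{2xiu}, read with truncated $x$-tails, needs this repair) to get
\begin{equation*}
Z \;\ge\; \sum_{t=1}^{l}\sum_{i= D_t}^{d^3-1} x_i \;\ge\; \sum_{t=1}^{l}\sum_{i= D_t}^{d-1} r_i \;=\; \sum_{i=0}^{d-1} w_i\, r_i ,
\end{equation*}
where $w_i$ is the pyramid level of the $i$-th global slot. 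This converts the gap into $\Delta_{EG}/E\le \sum_i i\,p_i+\sum_i i\,q_i+\sum_i (i-w_i)\,r_i$, a sum of linear functionals with nonnegative, nondecreasing coefficients, each of which is maximized over nonincreasing probability vectors at the uniform vector — only then does your ``discrete optimization over occupied slots'' become available and yield $M(d)=\tfrac{3(d-1)}{2}-\tfrac{1}{d}\sum_{i=0}^{d-1}w_i$. Note also that the pyramid correction can be spent on only \emph{one} marginal, since $Z$ appears once; this asymmetry, invisible in your decoupled scheme, is why $M(d)$ carries a single correction term rather than three.
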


\begin{proof}
Let the spectra of the reduced sub-states $\rho_A$, $\rho_B$ and $\rho_C$ be
$\lambda(\rho_A)=\{p_0, p_1, \cdots, p_{d-1}\}$, $\lambda(\rho_B)=\{q_0, q_1, \cdots, q_{d-1}\}$ and $\lambda(\rho_C)=\{r_0, r_1, \cdots, r_{d-1}\}$, respectively. Without loss of generality, we assume that the spectra are arranged in nonincreasing order.

i) {\it Proof of $\Delta_{EG}\leq (Y-Z)E$}

From \eqref{EG}, the ergotropy gap of the system can be written in the following form:
\begin{equation}\label{th1}
\Delta_{EG}=\sum\limits_{i=0}^{d-1}i(p_i+q_i+r_i)E-\mathrm{Tr}(\rho_{ABC}^pH_{ABC}),
\end{equation}
where the first three terms are the local passive state energies of subsystems $A$, $B$ and $C$, respectively, and the last term is the global passive state energy.

A state $\rho$ is said to be majorized by a state $\sigma$,
$\lambda(\rho)\prec\lambda(\sigma)$, if
$\sum\limits_{i=1}^kp_i^\downarrow \leq \sum\limits_{i=1}^kq_i^\downarrow$
for $1\leq k\leq n-1$ and $\sum\limits_{i=1}^np_i^\downarrow = \sum\limits_{i=1}^nq_i^\downarrow$,
where $\lambda(\rho)\equiv\{p_i^\downarrow\}$ and $\lambda(\sigma)\equiv \{q_i^\downarrow\}$
are the spectra of $\rho$ and $\sigma$, respectively, arranged in nonincreasing order.
By convention one appends zeros to make the two vectors $\lambda(\rho)$ and $\lambda(\sigma)$ have the same dimensions.
From the Nielsen-Kempe separability criterion \cite{MAN},
if $\rho_{ABC}$ is separable, one has
$\lambda(\rho_A) \succ \lambda(\rho_{ABC})  \quad\wedge \quad\lambda(\rho_B) \succ \lambda(\rho_{ABC})  \quad\wedge\quad  \lambda(\rho_C)\succ\lambda(\rho_{ABC})$.
 Namely, e.g. for $\rho_A$,
\begin{equation} \label{xiu1}
  \begin{split}
             p_0\geq x_0, ~~
             p_0+p_1\geq x_0+x_1,~~
             \cdots,~~
             p_0+\cdots+p_i\geq x_0+\cdots+x_i ,~~
             \cdots,~~\\
           \sum\limits_{i=0}^{d-2}p_i\geq \sum\limits_{i=0}^{d-2}x_i,~~
             \cdots, ~~
             \sum\limits_{i=0}^{d^3-2}p_i\geq \sum\limits_{i=0}^{d^3-2}x_i,~~
             \sum\limits_{i=0}^{d-1}p_i=\sum\limits_{i=0}^{d^3-1}p_i= \sum\limits_{i=0}^{d^3-1}x_i,~~
             \end{split}
          \end{equation}
          by substituting  $p_i$ for $q_i$ and $r_i$, similar results are obtained  for $\rho_B$ and $\rho_C$, respectively.

    Subtracting the last term by  $\mathrm{1st~ term}, \mathrm{2nd ~term}, \cdots, (d^3-1)$-th term in \eqref{xiu1},  respectively, we get
\begin{equation}\label{xiu8}
 \begin{split}
             \sum\limits_{i=1}^{d-1}p_i \leq \sum\limits_{i=1}^{d^3-1}x_i,~~
             \sum\limits_{i=2}^{d-1}p_i \leq \sum\limits_{i=2}^{d^3-1}x_i,~~
             \cdots,~~\\
             \sum\limits_{i=j}^{d-1}p_i\leq \sum\limits_{i=j}^{d^3-1}x_i ,~~
             \cdots,~~
             p_{d-1}\leq \sum\limits_{{i=}d-1}^{d^3-1}x_i.~~
              \end{split}
 \end{equation}
 Similar inequalities can be obtained  for $\rho_B$ and $\rho_C$, respectively.

Summing over the above inequalities for $\rho_A$, $\rho_B$ and $\rho_C$ , we obtain
\begin{equation}\label{th2}
\sum\limits_{i=0}^{d-1}i(p_i+q_i{+}r_i)\leq3\sum\limits_{i=0}^{d-1}ix_i+3(d-1)\sum\limits_{i=d}^{d^3-1}x_i.
\end{equation}
From \eqref{th2} into \eqref{th1}, we get the bound of the ergotropic gap,
\begin{equation}\label{th3}
\Delta_{EG}\leq 3\sum\limits_{i=0}^{d-1}ix_iE+3(d-1)\sum\limits_{i=d}^{d^3-1}x_iE-\mathrm{Tr}(\rho_{ABC}^pH_{ABC})\triangleq (Y-Z)E,
\end{equation}
where
$Y\equiv 3\sum\limits_{i=0}^{d-1}ix_i+3(d-1)\sum\limits_{i=d}^{d^3-1}x_i$
and $ZE=\mathrm{Tr}(\rho_{ABC}^pH_{ABC})$.

To evaluate $ZE=\mathrm{Tr}(\rho_{ABC}^pH_{ABC})=\min\limits_{U\in \mathcal{L}(\mathcal{H}_A\otimes\mathcal{H}_B\otimes\mathcal{H}_C)}\mathrm{Tr}(U\rho_{ABC}U^\dag H_{ABC})$,  note that the total non-interacting global Hamiltonian
\begin{equation}
\begin{split}
H_{ABC} &=H_A\otimes I_B\otimes I_C+I_A\otimes H_B\otimes I_C+I_A\otimes I_B\otimes H_C\\
&= diag\{0,\cdots,d-1\}E\otimes diag\{1,\cdots,1\}\otimes diag\{1,\cdots,1\}+diag\{1,\cdots,1\}\otimes diag\{0,\cdots,d-1\}E\\
&\otimes diag\{1,\cdots,1\}+diag\{1,\cdots,1\}\otimes diag\{1,\cdots,1\}\otimes diag\{0,\cdots,d-1\}E \\
&= diag\{0,\cdots,d-1,\cdots,d-1,\cdots,2d-2,\cdots,d-1,\cdots,2d-2,\cdots,2d-2,\cdots,3d-3\}E.
\end{split}
\end{equation}

To obtain the minimum value, we need to designate the corresponding spectrum of the passive state $\rho_{ABC}^p$:

Energy 0:     $x_0\rightarrow|000\rangle.$

Energy 1:     $x_1\rightarrow|100\rangle;~   x_2\rightarrow|010\rangle;~ x_3\rightarrow|001\rangle.$

Energy 2:      $x_4\rightarrow|200\rangle;~   x_5\rightarrow|110\rangle;~ x_6\rightarrow|020\rangle;$~
               $x_7\rightarrow|011\rangle;~  x_8\rightarrow|002\rangle;~ x_9\rightarrow|101\rangle.$

Energy 3:      $x_{10}\rightarrow|300\rangle;~   x_{11}\rightarrow|210\rangle;~ x_{12}\rightarrow|120\rangle;$~
               $x_{13}\rightarrow|030\rangle;~  x_{14}\rightarrow|021\rangle;~ x_{15}\rightarrow|012\rangle;$~
               $x_{16}\rightarrow|003\rangle;~  x_{17}\rightarrow|102\rangle;~ x_{18}\rightarrow|201\rangle;~x_{19}\rightarrow|111\rangle.$

    $\cdots$.

Energy $i$:      $x_{\frac{i(i+1)(i+2)}{6}}\triangleq x_{D_i}\rightarrow|i00\rangle;~   x_{D_i+1}\rightarrow|(i-1)10\rangle;~\cdots;~ x_{D_i+i}\rightarrow|0i0\rangle,~$
             $x_{D_i+i+1}\rightarrow|0(i-1)1\rangle;~ \cdots;~ $\\
             $x_{D_i+2i}\rightarrow|00i\rangle;~\cdots;~ x_{D_i+j_i'}\rightarrow|j'_{i1}j'_{i2}j'_{i3}\rangle;~ \cdots;~x_{D_{i+1}-1},$~
    where $0\leq j_i'\leq\frac{(i+1)(i+2)}{2}-1$ and $j'_{i1}+j'_{i2}+j'_{i3}=i$.

    $\cdots$.

Energy $d-1$:      $x_{D_{d-1}}\rightarrow|(d-1)00\rangle;~ x_{D_{d-1}+1}\rightarrow|(d-2)10\rangle;~\cdots; x_{D_{d-1}+j_{d-1}'}\rightarrow|j'_{(d-1)1}j'_{(d-1)2}j'_{(d-1)3}\rangle;~ \cdots;~x_{D_d-1},$\\where $0\leq j_{d-1}'\leq\frac{d(d+1)}{2}-1$ and  $j'_{(d-1)1}+j'_{(d-1)2}+j'_{(d-1)3}=d-1$.

Energy $d$:      $x_{D_d}\rightarrow|(d-1)10\rangle;~ x_{{D_d}+1}\rightarrow|(d-2)20\rangle;~ \cdots;~ x_{D_d+k_1'}\rightarrow|k'_{11}k'_{12}k'_{13}\rangle;~ \cdots;~x_{D_d+\frac{(d+1)(d+2)}{2}-\frac{3}{2}(1\times 2)}= x_{D_{d+1}-3D_1-1},$~
    where $0\leq k_1'\leq \frac{(d+1)(d+2)}{2}-\frac{3}{2}(1\times 2)-1$ and  $k'_{11}+k'_{12}+k'_{13}=d$ with $ k'_{11},k'_{12},k'_{13}\leq d-1.$

  $\cdots$.

Energy $d+i-1$:      $x_{D_{d+i-1}-3D_{i-1}} \rightarrow|(d-1)i0\rangle;~x_{D_{d+i-1}-3D_{i-1}+1} \rightarrow|(d-2)(i+1)0\rangle;~\cdots; x_{D_{d+i-1}-3D_{i-1}+k_i'}\rightarrow|k'_{i1}k'_{i2}k'_{i3}\rangle;~ \cdots;~x_{D_{d+i}-3D_i-1},$~
    where $0\leq k_i'\leq\frac{(d+i)(d+i+1)}{2}-\frac{3}{2}i(i+1)-1$ and  $k'_{i1}+k'_{i2}+k'_{i3}=d+i-1$ with $ k'_{i1},k'_{i2},k'_{i3}\leq d-1.$

     $\cdots$.

Energy $2d-2$:      $x_{D_{2d-2}-3D_{d-2}}\rightarrow|(d-1)(d-1)0\rangle;~ x_{D_{2d-2}-3D_{d-2}+1}\rightarrow|(d-2)d0\rangle;~\cdots; x_{D_{2d-2}-3D_{d-2}+k_{d-1}'}\rightarrow|k'_{(d-1)1}k'_{(d-1)2}k'_{(d-1)3}\rangle;~ \cdots;~x_{D_{2d-1}-3D_{d-1}-1},$~
    where $0\leq k_{d-1}'\leq\frac{(2d-1)2d}{2}-\frac{3}{2}(d-1)\times d -1$ and  $k'_{(d-1)1}+k'_{(d-1)2}+k'_{(d-1)3}=2d-2$ with  $k'_{(d-1)1},k'_{(d-1)2},k'_{(d-1)3}\leq d-1.$

Energy $2d-1$:      $x_{D_{2d-1}-3D_{d-1}}\triangleq x_{D-D_{d-1}} \rightarrow|(d-1)(d-1)1\rangle;~  x_{D-D_{d-1}+1} \rightarrow|(d-2)(d-1)2\rangle;~\cdots; x_{D-D_{d-1}+l_1'}\rightarrow|l'_{11}l'_{12}l'_{13}\rangle;~ \cdots;~ x_{D-D_{d-2}-1},$~
    where $D=\frac{(2d-1)2d(2d+1)}{6}-\frac{(d-1)d(d+1)}{3}$, $0\leq l_1'\leq\frac{(d-1)d}{2}-1$ and $l'_{11}+l'_{12}+l'_{13}=2d-1$ with $l'_{11},l'_{12},l'_{13}\leq d-1$.

     $\cdots$.

Energy $2d+i-2$:      $x_{D-D_{d-i}}\rightarrow|(d-1)(d-1)i\rangle ;~ x_{D-D_{d-i}+1}\rightarrow|(d-2)(d-1)(i+1)\rangle ;~ \cdots; x_{D-D_{d-i}+l_i'}\rightarrow|l'_{i1}l'_{i2}l'_{i3}\rangle;~ \cdots;~ x_{D-D_{d-i-1}-1},$~
    where $0\leq l_i'\leq \frac{(d-i)(d-i+1)}{2}-1$ and  $l'_{i1}+l'_{i2}+l'_{i3}=2d+i-2$ with $l'_{i1},l'_{i2},l'_{i3}\leq d-1$.

 $\cdots$.

Energy $3d-3$:      $x_{d^3-1}\rightarrow|(d-1)(d-1)(d-1)\rangle$.

For convenience, we give a diagram to represent the above specification for the corresponding spectrum of the passive state $\rho_{ABC}^p$. In this diagram,  elements in each row  have the same energy, as follows:
\begin{equation}
\bordermatrix{
0        & x_0                     &             &                 \cr
1        & x_1                     &\cdots       &x_3                  \cr
\vdots   &\vdots                   &             && \cr
i       &x_{D_i}                   &\cdots   &x_{D_i+{j_i'}}    &\cdots&x_{D_{i+1}-1}      \cr
\vdots  &\vdots                    &                                                 \cr
d-1   &x_{D_{d-1}}             &\cdots       &\cdots&\cdots&x_{D_{d}-1} \cr
d     &x_{D_{d}}               &\cdots       &\cdots&\cdots&\cdots&x_{D_{d+1}-3D_1-1}    \cr
\vdots  &\vdots                    &               \cr
d+i-1 &x_{D_{d+i-1}-3D_{i-1}}  &\cdots       &\cdots&x_{D_{d+i-1}-3D_{i-1}+{k_i'}}&\cdots&\cdots&x_{D_{d+i}-3D_i-1}  \cr
\vdots  &\vdots                      &             \cr
2d-2  &x_{D_{2d-2}-3D_{d-2}}  &\cdots      &\cdots&\cdots&\cdots&x_{D_{2d-1}-3D_{d-1}-1}      \cr
2d-1  &x_{D-D_{d-1}}           &\cdots       &\cdots&\cdots&x_{D-D_{d-2}-1}     \cr
\vdots  &\vdots                    &                 \cr
2d+i-2&x_{D-D_{d-i}}              &\cdots  &x_{D-D_{d-i}+{l_i'}}  &\cdots&x_{D-D_{d-i-1}-1}      \cr
\vdots  &\vdots                     &                   \cr
3d-3  &x_{d^3-1}                     &           &       \cr
}.
\end{equation}

Therefore,
\begin{equation}\label{th4}
\begin{split}
ZE&=\mathrm{Tr}(\rho_{ABC}^pH_{ABC})\\&=\sum\limits_{i=1}^{d-1}iE\sum\limits_ {{j_i'}=0}^{\frac{(i+1)(i+2)}{2}-1}x_{D_i+{{j_i'}}}+
\sum\limits_{i=1}^{d-1}(d-1+i)E\sum\limits_{{k_i'}=0}^{\frac{(d+i)(d+i+1)}{2}-\frac{3}{2}i(i+1)-1}x_{D_{d+i-1}-3D_{i-1}+{k_i'}}\\
&+\sum\limits_{i=1}^{d-1}(2d-2+i)E\sum\limits_{{l_i'}=0}^{\frac{(d-i)(d-i+1)}{2}-1}x_{D-D_{d-i}+{l_i'}},
\end{split}
\end{equation}
where the last three terms in \eqref{th4} correspond to the spectra $\{x_0, x_1, \cdots, x_{D_{d}-1}\}$, $\{x_{D_{d}}, \cdots, x_{D_{2d-1}-3D_{d-1}-1}\}$ and $\{x_{D-D_{d-1}}=x_{D_{2d-1}-3D_{d-1}}, \cdots, x_{d^3-1}\}$, respectively.

Substituting \eqref{th4} in \eqref{th3}, we get
\begin{equation}
\begin{split}
\Delta_{EG}&\leq \bigg(3\sum\limits_{i=0}^{d-1}ix_i+3(d-1)\sum\limits_{i=d}^{d^3-1}x_i-\sum\limits_{i=1}^{d-1}i\sum\limits_{{{j_i'}=0}}^{\frac{(i+1)(i+2)}{2}-1}x_{D_i+{{j_i'}}}\\
&-\sum\limits_{i=1}^{d-1}(d-1+i)\sum\limits_{{k_i'}=0}^{\frac{(d+i)(d+i+1)}{2}-\frac{3}{2}i(i+1)-1}x_{D_{d+i-1}-3D_{i-1}+{k_i'}}-\sum\limits_{i=1}^{d-1}(2d-2+i)\sum\limits_{{l_i'}=0}^{\frac{(d-i)(d-i+1)}{2}-1}x_{D-D_{d-i}+{l_i'}}\bigg)E,
\end{split}
\end{equation}
which proves $\Delta_{EG}\leq (Y-Z)E$.

ii) {\it Proof of $\Delta_{EG}\leq M({ d})E$}

Similar to the approach used in \cite{MA}, we rewrite \eqref{th1} as,
\begin{equation}\label{th5}
\begin{split}
\Delta_{EG}&=\sum\limits_{i=0}^{d-1}i(p_i+q_i+r_i)E
-\sum\limits_{i=1}^{l-1}iE\sum\limits_{k'=0}^{\frac{(i+1)(i+2)}{2}-1}r_{D_i+k'}-lE\sum\limits_{k'=0}^m r_{D_l+k'}\\
&+\sum\limits_{i=1}^{l-1}iE\sum\limits_{k'=0}^{\frac{(i+1)(i+2)}{2}-1}r_{D_i+k'}+lE\sum\limits_{k'=0}^m r_{D_l+k'}
-\mathrm{Tr}(\rho_{ABC}^pH_{ABC}),
\end{split}
\end{equation}
where $l$ and $m$ are integers determined uniquely by the constraint
\begin{equation}
D_i+m=\frac{l(l+1)(l+2)}{6}+m=d-1, \quad 0\leq m\leq\frac{(l+1)(l+2)}{2}.
\end{equation}

Replacing $p_i$ with $r_i$ in  \eqref{xiu8},  we obviously have
\begin{equation}\label{2xiu}
\sum\limits_{i=1}^{l-1}iE\sum\limits_{k'=0}^{\frac{(i+1)(i+2)}{2}-1}r_{D_i+k'}+lE\sum\limits_{k'=0}^m r_{D_l+k'}\leq\sum\limits_{i=1}^{l-1}iE\sum\limits_{k'=0}^{\frac{(i+1)(i+2)}{2}-1}x_{D_i+k'}+lE\sum\limits_{k'=0}^m x_{D_l+k'}.
\end{equation}

    Putting the expressions of  \eqref{th4} and \eqref{2xiu} into \eqref{th5}, we get
\begin{equation}\label{th6}
\begin{split}
\Delta_{EG}&\leq\sum\limits_{i=0}^{d-1}i(p_i+q_i+r_i)E
-\sum\limits_{i=1}^{l-1}iE\sum\limits_{k'=0}^{\frac{(i+1)(i+2)}{2}-1}r_{D_i+k'}-lE\sum\limits_{k'=0}^m r_{D_l+k'}+\sum\limits_{i=1}^{l-1}iE\sum\limits_{k'=0}^{\frac{(i+1)(i+2)}{2}-1}x_{D_i+k'}\\
&+lE\sum\limits_{k'=0}^m x_{D_l+k'}
-\bigg[\sum\limits_{i=1}^{d-1}iE\sum\limits_{{j_i'}=0}^{\frac{(i+1)(i+2)}{2}-1}x_{D_i+{j_i'}}+
\sum\limits_{i=1}^{d-1}(d-1+i)E\sum\limits_{{k_i'}=0}^{\frac{(d+i)(d+i+1)}{2}-\frac{3}{2}i(i+1)-1}x_{D_{d+i-1}-3D_{i-1}+{k_i'}}\\
&
+\sum\limits_{i=1}^{d-1}(2d-2+i)E\sum\limits_{{l_i'}=0}^{\frac{(d-i)(d-i+1)}{2}-1}x_{D-D_{d-i}+{l_i'}}\bigg].
\end{split}
\end{equation}

Note that
\begin{equation}\label{2xiuth7}
\sum\limits_{i=0}^{d-1}ir_iE
-\sum\limits_{i=1}^{l-1}iE\sum\limits_{k'=0}^{\frac{(i+1)(i+2)}{2}-1}r_{D_i+k'}-lE\sum\limits_{k'=0}^m r_{D_l+k'}=\sum\limits_{i=1}^{\frac{l(l+1)(l+2)}{6}-1}(i-l')r_iE+\sum\limits_{i=\frac{l(l+1)(l+2)}{6}}^{d-1}(i-l)r_iE,
\end{equation}
where $(l', m')$ are determined by $i=\frac{l'(l'+1)(l'+2)}{6}+m'$, $0\leq m'\leq\frac{(l'+1)(l'+2)}{2}$.
Substituting \eqref{2xiuth7} in \eqref{th6}, we have
\begin{equation}\label{th7}
\Delta_{EG}\leq\sum\limits_{i=0}^{d-1}ip_iE+\sum\limits_{i=0}^{d-1}iq_iE
+\sum\limits_{i=1}^{\frac{l(l+1)(l+2)}{6}-1}(i-l')r_iE
+\sum\limits_{i=\frac{l(l+1)(l+2)}{6}}^{d-1}(i-l)r_iE-\delta E,
\end{equation}
where
\begin{equation}
\begin{split}
\delta &\triangleq\bigg[\sum\limits_{i=1}^{d-1}i\sum\limits_{{j_i'}=0}^{\frac{(i+1)(i+2)}{2}-1}x_{D_i+{j_i'}}+
\sum\limits_{i=1}^{d-1}(d-1+i)\sum\limits_{{k_i'}=0}^{\frac{(d+i)(d+i+1)}{2}-\frac{3}{2}i(i+1)-1}x_{D_{d+i-1}-3D_{i-1}+{k_i'}}\\
&+\sum\limits_{i=1}^{d-1}(2d-2+i)\sum\limits_{{l_i'}=0}^{\frac{(d-i)(d-i+1)}{2}-1}x_{D-D_{d-i}+{l_i'}}\bigg]-\bigg(\sum\limits_{i=1}^{l-1}i\sum\limits_{k'=0}^{\frac{(i+1)(i+2)}{2}-1}x_{D_i+k'}+l\sum\limits_{k'=0}^m x_{D_l+k'}\bigg)\geq0.
\end{split}
\end{equation}

Maximizing the right hand side of \eqref{th7} we have $\min(\delta)=0$,
$p_i=q_i=r_i={1}/{d}$ for $i=1,...,d-1$, and
\begin{equation}
\begin{split}
\Delta_{EG}&\leq M(d)E=\bigg(\frac{d-1}{2}+\frac{d-1}{2}+\frac{d-1}{2}-\frac{1}{d}\sum\limits_{i=1}^{l-1}i\frac{(i+1)(i+2)}{2}-\frac{l(m+1)}{d}\bigg)E\\
&=\bigg(\frac{3(d-1)}{2}-\frac{l}{d}\big(\frac{l^3+2l^2-5l+2}{8}+m+1\big)\bigg)E.
\end{split}
\end{equation}
Altogether we have $\Delta_{EG}\leq \min\Big\{(Y-Z)E,M(d)E\Big\}$.
\end{proof}

From the Theorem, we have the following separability criterion
\begin{corollary}
A tripartite state $\rho_{ABC}$ is entangled if
\begin{equation}\label{en1}
\Delta_{EG}> \min\Big\{(Y-Z)E,M(d)E\Big\}.
\end{equation}
\end{corollary}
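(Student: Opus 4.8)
The plan is to reduce everything to spectra. Starting from the closed form \eqref{EG}, I note that for a single subsystem governed by $H=\sum_{j=0}^{d-1}jE|j\rangle\langle j|$ the passive-state energy equals $E\sum_{i=0}^{d-1}i\mu_i$, where $\{\mu_i\}$ is that subsystem's spectrum listed in nonincreasing order (the largest population sits on the lowest energy level). Hence $\Delta_{EG}=E\sum_{i=0}^{d-1}i(p_i+q_i+r_i)-\mathrm{Tr}(\rho_{ABC}^pH_{ABC})$, cleanly separating three local passive energies from one global passive energy. The whole problem then splits into two independent tasks: (a) bound the local part $E\sum_i i(p_i+q_i+r_i)$ from above using separability, and (b) evaluate the global passive energy $\mathrm{Tr}(\rho_{ABC}^pH_{ABC})$ exactly.

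For the bound $\Delta_{EG}\le(Y-Z)E$, the engine is the Nielsen--Kempe criterion \cite{MAN}: separability forces $\lambda(\rho_X)\succ\lambda(\rho_{ABC})$ for each $X\in\{A,B,C\}$. Reading the majorization partial sums from the tail gives $\sum_{i=j}^{d-1}p_i\le\sum_{i=j}^{d^3-1}x_i$, and likewise for $q$ and $r$. I would then rewrite the local energy by Abel summation, $\sum_{i=0}^{d-1}i\,p_i=\sum_{j=1}^{d-1}\sum_{i=j}^{d-1}p_i$, apply the tail inequalities term by term, and collect coefficients: index $i\le d-1$ acquires weight $i$ while index $i\ge d$ acquires weight $d-1$. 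Adding the three identical contributions reproduces exactly $Y=3\sum_{i=0}^{d-1}ix_i+3(d-1)\sum_{i=d}^{d^3-1}x_i$, so that $\Delta_{EG}\le YE-\mathrm{Tr}(\rho_{ABC}^pH_{ABC})=:(Y-Z)E$.

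Evaluating $Z$ is the combinatorial heart. Since $H_{ABC}$ assigns energy $(a+b+c)E$ to $|abc\rangle$, the global passive state is obtained by pouring $x_0\ge x_1\ge\cdots$ into the energy shells in order of increasing energy, so $Z$ is a sum of blocks of $x$'s weighted by their shell energies. I would enumerate each shell $n$ by its degeneracy $g(n)=\#\{(a,b,c):a+b+c=n,\ 0\le a,b,c\le d-1\}$. For $0\le n\le d-1$ this is the unconstrained count $\binom{n+2}{2}=\tfrac{(n+1)(n+2)}{2}$, giving the cumulative offset $D_i=\binom{i+2}{3}=\tfrac{i(i+1)(i+2)}{6}$ before shell $i$; for $d\le n\le 3d-3$ one must use inclusion--exclusion to discard configurations with some coordinate exceeding $d-1$, which is precisely what produces the corrections $-\tfrac{3}{2}i(i+1)$ and the shifted starting indices $D_{d+i-1}-3D_{i-1}$ and $D-D_{d-i}$. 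Tracking the first index and length of every shell and multiplying each block by its energy yields the three-sum form of $Z$. I expect this indexing to be the main obstacle: it is not deep, but a single off-by-one or a miscounted boundary shell corrupts the formula, so I would validate it against the sanity checks $\sum_n g(n)=d^3$, the top offset $D=\tfrac{(2d-1)2d(2d+1)}{6}-\tfrac{(d-1)d(d+1)}{3}$, and explicit $d=2,3$.

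For $\Delta_{EG}\le M(d)E$ I would follow the device of \cite{MA}. Fix the integers $l,m$ by $\tfrac{l(l+1)(l+2)}{6}+m=d-1$, add and subtract the partial passive energy of subsystem $C$ truncated at that index, bound one copy above by the corresponding $x$-partial sum via the same tail majorization, and let the other copy recombine with the exact global term into a nonnegative defect $\delta\ge0$. Regrouping the surviving coefficients of subsystem $C$ into the piecewise weights $(i-l')$ and $(i-l)$ leaves an upper bound consisting of the two full local energies $E\sum_i i\,p_i+E\sum_i i\,q_i$, a modified third-subsystem term, and $-\delta E$. I then maximize: for any nonincreasing probability vector the tails satisfy $\sum_{i\ge j}\mu_i\le\tfrac{d-j}{d}$ with equality iff $\mu_i=1/d$, so each full local energy is maximal at the uniform spectrum, which simultaneously drives $\delta$ to its minimum $0$. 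Setting $p_i=q_i=r_i=1/d$ and simplifying the remaining arithmetic sum produces $M(d)$. Taking the smaller of the two estimates proves the Theorem, and the Corollary follows at once by contraposition: a state that violates $\Delta_{EG}\le\min\{(Y-Z)E,M(d)E\}$ cannot be separable and is therefore entangled.
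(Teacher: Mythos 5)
Your proposal is correct and follows essentially the same route as the paper: the Nielsen--Kempe majorization tail inequalities to bound the local passive energies by $Y$, the shell-by-shell filling of the global passive state (with the same inclusion--exclusion bookkeeping yielding $Z$), the add-and-subtract device of \cite{MA} with the $(l,m)$ decomposition and nonnegative defect $\delta$ to get $M(d)$, and finally the corollary by contraposition. Your Abel-summation derivation of $Y$ and the degeneracy sanity checks are slightly more explicit than the paper's presentation, but the argument is the same.
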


In particular, for three-qubit case, the reduced states are just qubit ones. The local qubit systems are governed by the same two-energy levels Hamiltonian $H=E|1\rangle\langle1|$.
For separable three-qubit states $\rho_{ABC}$ with the spectrum $\{x_0, x_1, \cdots, x_7\}$ in nonincreasing order, the ergotropy gap is bounded by
\begin{equation}
\Delta_{EG}\leq \min \Big\{[2(x_1+x_2+x_3)+x_4+x_5+x_6]E, E \Big\}.
\end{equation}

Let us consider several examples.

\begin{example}
The superposition of W and GHZ states \cite{RLA}:
$|\psi\rangle=\sqrt{p}|GHZ\rangle+\sqrt{1-p}|W\rangle$, $0\leq p\leq1$,
where
$|GHZ\rangle=\frac{1}{\sqrt{2}}(|000\rangle+|111\rangle)$
and
$|W\rangle=\frac{1}{\sqrt{3}}(|100\rangle+|010\rangle+|001\rangle)$.
For this state, we have
$\Delta_{EG}=\frac{3-\sqrt{1+4p-5p^2}}{2}E$, $Y-Z=0$ and $M(d)=1$. From (\ref{en1}) we have that $|\psi\rangle$ is entangled for  $0\leq p\leq 1$.

The entanglement criterion for GHZ states \cite{YA} claims that  any separable 3-qubit state $\rho$ satisfies the following inequalities
$$\mid \left< A_{1}\right>_{\rho } \pm \left< A_{2}\right>_{\rho }  +\left< A_{3}\right>_{\rho }  \mid \leq 1, ~~\mid \left< A_{1}\right>_{\rho }  \pm \left< A_{2}\right>_{\rho }  -\left< A_{3}\right>_{\rho }  \mid \leq 1,$$
where $A_1=\sigma_{x} \otimes \sigma_{x} \otimes \sigma_{x} ,$ $A_2=I \otimes \sigma_{z} \otimes \sigma_{z} ,$ $A_3=\sigma_{y} \otimes \sigma_{y} \otimes \sigma_{x} , $ and $\left< A_{i}\right>_{\rho }= Tr(\rho A_i).$
The inequalities may be violated by entangled states.
 It has been shown that $|\psi\rangle$ is entangled for $\frac{2}{5}< p\leq 1$.

 And the entanglement criterion for W states \cite{YA} claims that any separable 3-qubit state $\rho$ satisfies the following inequalities
$$\mid \left< B_{1}\right>_{\rho }  \pm \left< B_{2}\right>_{\rho }  +\left< B_{3}\right>_{\rho}  \mid \leq 1, ~~\mid \left< B_{1}\right>_{\rho }  \pm \left< B_{2}\right>_{\rho }  -\left<B_{3}\right>_{\rho}  \mid \leq 1,$$
where $B_1=I \otimes \sigma_{x} \otimes \sigma_{x} ,$ $B_2=I \otimes \sigma_{y} \otimes \sigma_{y} ,$ $B_3=\sigma_{z} \otimes \sigma_{z} \otimes \sigma_{z} ,$ and $\left< B_{i}\right>_{\rho }= Tr(\rho B_i).$
The inequalities may be violated by entangled states.
 It has been shown that $|\psi\rangle$ can be identified as an entangled state for $0 \leq p < \frac{4}{7}$. It is obvious  that our result is an improvement.

\end{example}

\begin{example}
The GHZ state mixed with a colored noise \cite{ACA, MSP},
$$
\rho=\frac{p}{2}(|000\rangle\langle 000|+|111\rangle\langle111|)+(1-p)|GHZ\rangle\langle GHZ|, ~~~0\leq p\leq1.
$$
For this state, we have
$\Delta_{EG}=\big(\frac{3}{2}-\frac{p}{2}\big)E$, $Y-Z=p$ and $M(d)=1$.
Therefore $\rho_{ABC}$ is entangled for $p<1$, which is the same result obtained in \cite{YA}.
\end{example}

\begin{example}
The GHZ state mixed with the white noise,
$$
\rho=(1-p)\frac{I}{8}+p|GHZ\rangle\langle GHZ|,~~~ 0\leq p\leq1.
$$
For this state, we have
$\Delta_{EG}=\frac{3}{2}pE$, $Y-Z=\frac{9}{8}(1-p)$ and $M(d)=1$.
Hence $\rho_{ABC}$ is entangled for $p>\frac{3}{7}$, as the result obtained in \cite{OG, SS}.
\end{example}

\section{Conclusion}
We have investigated the ergotropic gap, the difference between the global ergotropy and local ergotropy for tripartite systems. The upper bounds of the ergotropic gap have been analytically
derived. The violation of the bound provides a sufficient criterion for entanglement.
The ergotropic gap is tightly related to the work extraction and correlations among the subsystems.
Our results may highlight further studies on ergotropic gaps for multipartite systems the detection of bi-separability and the genuine multipartite entanglement.

\section{Acknowledgement}
This work is supported by the National Natural Science Foundation of China under
Grant Nos. 11871350,  12075159 and 12171044; Beijing Natural Science Foundation(Grant No.
Z190005); Academy for Multidisciplinary Studies, Capital Normal University;  the Academician Innovation Platform of Hainan Province; Shenzhen Institute for Quantum Science and Engineering, Southern University of Science and
Technology(No. SIQSE202001).

\end{document}